\newcommand{\be}{\begin{eqnarray}}
	\newcommand{\ee}{\end{eqnarray}}
\newcommand{\calR}{{\mathcal R}}
\newcommand{\calP}{{\mathcal P}}
\newtheorem{theorem}{Theorem}
\theoremstyle{definition}
\begin{document}
	\title{Random geometry of maximum-density dimer packings of the site-diluted kagome lattice}
	\author{Ritesh Bhola}
	\affiliation{\small{Department of Theoretical Physics, Tata Institute of Fundamental Research, Mumbai 400005, India}}
	\author{Kedar Damle}
	\affiliation{\small{Department of Theoretical Physics, Tata Institute of Fundamental Research, Mumbai 400005, India}}
	\begin{abstract}
	Recent work~\cite{Ansari_Damle_2024} that analyzed the effect of vacancy disorder on a short-range resonating valence bond spin liquid state  of kagome-lattice antiferromagnets argued that such spin liquids are stable to vacancy disorder. The argument relied crucially on a numerical study that identified the following property of the site-diluted kagome lattice: maximum-density dimer packings (maximum matchings) of any connected component of such site-diluted kagome lattices have at most one unmatched vertex that hosts a monomer. Here, we provide an inductive proof of a stronger result that implies this property: If a connected cluster of such a lattice has an odd number of vertices, its Gallai-Edmonds decomposition~\cite{Lovas_Plummer_1986} has  exactly one ${\mathcal R}$-type region that spans the entire connected cluster and hosts a single monomer of any maximum-density dimer packing. If on the other hand it has an even number of sites, it admits perfect matchings (fully-packed dimer coverings with no monomers) and its Gallai-Edmonds decomposition consists of a single ${\mathcal P}$-type region that spans the entire cluster.  Our proof also applies to the site-diluted Archimedean star lattice, the site-diluted pyrochlore lattice (corner-sharing tetrahedra), the site-diluted hyperkagome lattice, and, more generally, to any lattice satisfying a certain local connectivity property. It does not apply to bond-diluted versions of such lattices. Indeed, a simple argument can be used to obtain a {\em lower bound} on the {\em density} of unmatched vertices in maximum matchings of bond-diluted kagome lattices.
	\end{abstract}
	\maketitle
	\section{Introduction}
	In statistical mechanics, dimer models serve as an interesting and important paradigm for the behavior of macroscopic systems which have a thermodynamically-significant degeneracy of 
low-energy states~\cite{Henley_2010,Huse_Krauth_Moessner_Sondhi_2003,Alet_Ikhlef_Jacobsen_Grégoire_2006,Desai_Pujari_Damle_2021,Mambrini_2008,Rakala_Damle_Algorithm,Damle_Dhar_Ramola_2012,Patil_Dasgupta_Damle_2014} More accurately, one studies the ensemble statistics of fully-packed dimer covers of a lattice, which obey two constraints: No site is left untouched by a dimer, and no two dimers touch at a site. Here, dimers are hard rods that can be placed on links of the lattice and touch the two sites connected by the link. In computer science, such fully-packed dimer coverings are called ``perfect matchings'' and sites touched by the same dimer are said to be matched to each other~\cite{Lovas_Plummer_1986}. Algorithms for diagnosing the existence of perfect matchings of a graph and producing one such matching are a well-studied and important part of this subject~\cite{Lovas_Plummer_1986}. 
	
	In the presence of quenched disorder, the twin constraints of full packing and hard-core exclusion are typically incompatible, and the best one can do is a ``maximum matching'', equivalently, a maximum-density dimer packing~\cite{Bhola_Biswas_Islam_Damle_2022,Lovas_Plummer_1986}. Such a dimer packing has a minimum number of monomers (located on the unmatched vertices of the maximum matching). Although the statistical mechanics of maximum-density dimer packings has only recently entered the physics literature, the problem of finding maximum matchings is an important and extremely well-studied part of the computer science literature, with connections to resource allocation problems, network flow optimization, and algorithm design~\cite{Lovas_Plummer_1986,Moret_Shapiro_1991}. Further, the structure of maximum matchings plays an important role in graph theory~\cite{Lovas_Plummer_1986}.

	Recent work~\cite{Bhola_Biswas_Islam_Damle_2022} has shown interesting connections between the structure of maximum matchings of disordered bipartite lattices and several problems of interest in condensed matter physics. Using the Dulmage-Mendelsohn decomposition theorem, the authors showed that a disordered bipartite lattice can be uniquely decomposed into a complete set of non-overlapping  monomer-carrying regions, dubbed $\calR$-type regions, and perfectly matched regions, dubbed $\calP$-type regions~\cite{Bhola_Biswas_Islam_Damle_2022}. The $\calR$-type regions of a bipartite lattice can be further classified into $\calR_A$-type and $\calR_B$-type regions, since any given region can only host monomers on one of the two sublattices of the bipartite lattice~\cite{Bhola_Biswas_Islam_Damle_2022}. 	Other work~\cite{Wang_Sandvik_2006,Wang_Sandvik_2010} had also noted an interesting connection between monomer-carrying regions of maximum-density dimer packings and the spatial form factor of low-energy triplet excitations in the antiferromagnetically ordered phase of the spin $S=1/2$ Heisenberg antiferromagnet on diluted bipartite lattices. The theoretical picture in terms of ${\mathcal R}$-type regions and their random geometry~\cite{Bhola_Biswas_Islam_Damle_2022} provides a systematic basis for obtaining spatial information about these excitations.
	
	These $\calR$-regions were shown to also host the topologically-protected zero-energy eigenstates of disordered tight-binding models whose nonzero hopping amplitudes are specified by the surviving links of the bipartite lattice, and whose orbitals correspond to surviving sites of the bipartite lattice. The construction of $\calR$-type and $\calP$-type regions can also be generalized to non-bipartite graphs using the Gallai-Edmonds decomposition~\cite{Damle_2022}. Importantly, the $\calR$-type regions thus constructed host the topologically-protected collective Majorana zero modes of the corresponding Majorana network Hamiltonion~\cite{Damle_2022}. At low dilution, these regions of site-diluted non-bipartite lattices such as the triangular or Shastry-Sutherland lattice in two dimensions and the stacked triangular or corner-sharing octahedral lattices in three dimensions undergo an interesting percolation transition, and the corresponding percolated phase exhibits unusual violations of self-averaging~\cite{Bhola_Damle2025Preprint}.

In a separate strand of work, the recent study of Ref~\cite{Ansari_Damle_2024} investigated the role of vacancy disorder on quantum paramagnetic phases like short-range resonating valence bond (sRVB) spin liquid phases and valence bond solids (VBS). The key finding was that sRVB spin liquid phases develop a local-moment instability whenever the density $w$ of monomers in the corresponding maximum-density dimer packings is nonzero. These emergent local moments control the low-energy physics and the final fate of the system is determined by the interactions between them, which in turn depends crucially on the spatial form factor of these local moments. The random geometry of ${\mathcal R}$-type regions is again significant, since these emergent local moments live within the ${\mathcal R}$-type regions and interactions and the dominant interactions between them are expected to lie within individual ${\mathcal R}$-type regions.  This is in sharp contrast with the fact that VBS phases are {\em always} unstable to vacancy disorder since every vacancy seeds at intermediate energy scales a local moment in its vicinity.

Perhaps the most significant conclusions of their work was that sRVB spin liquid states of kagome lattice antiferromagnets are expected to be {\em stable} for small dilution $n_v$, in the sense an nonzero $n_v$ does not lead to a local moment instability. The key input here was a remarkable property of the site-diluted kagome lattice, established on the basis of numerical evidence in Ref.~\cite{Ansari_Damle_2024}: A maximum-density dimer packing of any geometrically-connected cluster of the site-diluted kagome lattice has at most one monomer independent of the site of the connected cluster. Indeed, connected clusters with an even number of sites were found to admit perfect matchings, while those with an odd number of sites always had maximum matchings with a single monomer. 
Given the tight connection between monomers of maximum-density dimer packings and vacancy-induced emergent local moments in sRVB spin liquid states, it would also have been useful to have additional spatialy-resolved information about where the solitary monomer is allowed to live inside odd-cardinality clusters. Indeed, such information is expected to have implications for exact-diagonalization studies of the low-energy spectrum of such clusters if the magnetic Hamiltonian or small deformations thereof support such spin liquid ground states. 
However, the numerical study was not able to obtain any such spatially resolved information.

With this motivation, we provide here an inductive proof of not just the basic observation of Ref.~\cite{Ansari_Damle_2024}, but also a characterization of the Gallai-Edmonds decomposition of connected clusters of the site-diluted kagome lattice. The latter is enabled by our proof-strategy, which uses the structure theory of Gallai and Edmonds in a crucial way apart from a certain crucial local connectivity property of the kagome lattice, which is preserved under site dilution but not by bond dilution. Although we present the proof in explicit detail only for the kagome lattice, our arguments also apply, mutatis mutandis, to any other lattice that satistifes this key local connectivity property. This class of lattices includes the star lattice (decorated honeycomb lattice), the pyrochlore lattice (comprising corner-sharing tetrahedra)  and a class of lattices in dimensions three and higher that are extensions of the two-dimensional kagome lattice~\cite{Chandra_Dhar_2008}. More generally, we also explain how this property is shared by any site-diluted lattice whose parent is constructed as the line graph of another lattice. In this language, the kagome lattice is the line graph of the honeycomb net, the pyrochlore lattice is the line graph of the diamond lattice, and the star lattice is the line graph of the Lieb lattice constructed by adding two vertices to every link of the honeycomb net.

The rest of the paper is organized as follows: In Sec.~\ref{sec:GE}, we provide a quick summary of the Gallai-Edmonds decomposition theorem and associated structure theory, as well as the construction of $\calR$-type and $\calP$-type regions of a general graph. In Sec.~\ref{sec:proof}, we present our inductive proof, explain how it applies to a fairly large class of site-diluted lattices, and set our results in the context of previous work in the mathematical literature that anticipated one part of the theorem proved here. In addition, in Sec.~\ref{sec:blossoms}, we provide additional numerical results that help us understand, as a function of the dilution $n_v$,  the internal structure of the single ${\mathcal R}$-type region that spans the entirety of the largest geometrically-connected cluster of a site-diluted kagome lattice.  


\section{Gallai-Edmonds decomposition: $\calR$-type and $\calP$-type regions }
\label{sec:GE}
This section summarizes key elements of the structure theory of Gallai and Edmonds~\cite{Lovas_Plummer_1986}, and the construction of $\calR$-type and $\calP$-type regions~\cite{Bhola_Biswas_Islam_Damle_2022,Damle_2022,Bhola_Damle2025Preprint} based on the Gallai-Edmonds decomposition on a general graph. This decomposition theorem partitions the vertices of a graph into three unique sets: even-type sites ($E$), odd-type sites ($O$) and unreachable sites ($U$) starting from any one of the maximum matchings of the graph. Even-type vertices are those that are reachable from a monomer by an even-length alternating path (aternating between links unoccupied by a dimer of a maximum matching and occupied by such a dimer). Odd-type vertices are reachable by such alternating paths in an odd number of steps, but not an even number of steps.
And unreachable vertices, as the name suggests, are simply not reachable from a monomer of a maximum matching by such alternating paths. Gallai-Edmonds theory guarantees that these type labels are properties of the underlying graph, and independent of the particular maximum matching one uses to assign these labels.

Now, to construct $\calR$-type and $\calP$-type regions~\cite{Bhola_Biswas_Islam_Damle_2022,Damle_2022,Bhola_Damle2025Preprint}, remove all links that connect any two odd-type sites as well as all links that connect any odd-type site to an unreachable site. The resulting connected components either comprise only unreachable sites, or are made up of odd-type and even-type sites. The former are the ${\mathcal P}$-type regions that are guaranteed to be perfectly matched by any maximum matching. The latter are the monomer-carrying ${\mathcal R}$-type regions which host the monomers of any maximum matching. These monomers always live on even-type sites. These even-type sites are organized in odd-cardinality structures called blossoms, and any given odd-type site is guaranteed to be connected to more than one blossom. Moreover, any maximum matching always matches every odd-type site to some even-type site, and no two odd-type sites are matched to even-type sites of the same blossom. In any maximum matching, each blossom is either perfectly matched (if one of its even-type sites is matched to an odd-type site) or hosts exactly one monomer (if none of its sites are matched to any odd-type site).
Given any even-type site, it is always possible to find a maximum matching that places a monomer on that even-type site. Also, in any maximum matching, all sites of a given ${\mathcal R}$-type region are reachable via alternating paths from one of the monomers contained in it.

	\section{Theorem on the Gallai-Edmonds decomposition of a class of site-diluted lattices}
	\label{sec:proof}
We provide here an inductive proof of a theorem on the structure of the Gallai-Edmonds decomposition of a class of site-diluted lattices. The statement was formulated based on the numerical results of Ref.~\cite{Ansari_Damle_2024} on the number of monomers in maximum-density dimer packings of connected clusters of a site-diluted kagome lattice.  To connect with these numerical results, we first formulate and prove the result for the site-diluted kagome lattice in Sec.~\ref{subsec:kagome}, and then indicate in Sec.~\ref{subsec:generalizations} how it generalizes to a large class of lattices of interest in physics, which all share a certain local connectivity property of the kagome lattice that is preserved under site dilution. 
After this work was completed, and while it was being presented in a seminar for the first time, Piyush Srivastava made us aware of the fact that this local connectivity property we had identified has a name and a body of theory associated with it in the mathematical literature: Graphs which satisfy this property are called ``claw-free'' graphs. Armed with this additional knowledge (and access to bibliographic search tools), we conclude this section by setting our result in the proper mathematical context~\cite{FaudreeClawFreeReview1997} and explaining how it goes beyond a classic theorem~\cite{SumnerAMS1974,LasVergnas1975} proved half a century ago in the graph theory literature. 
   \begin{figure}[ht]
   	\includegraphics[width=\columnwidth]{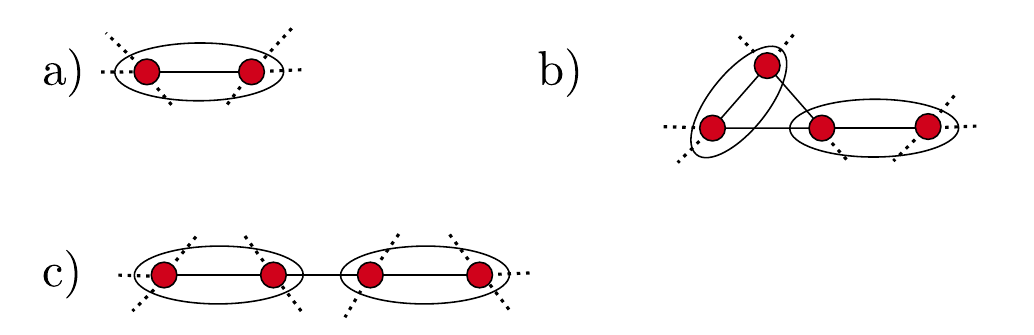}
   	\caption{The above figure shows all possible even parity clusters of size two and four on the kagome lattice. Notice that a single site, when attached to these regions, makes them a $\calR$-type region.}
   	\label{Fig:P_regions}
   \end{figure}
   \begin{figure}[ht]
   	\includegraphics[width=\columnwidth]{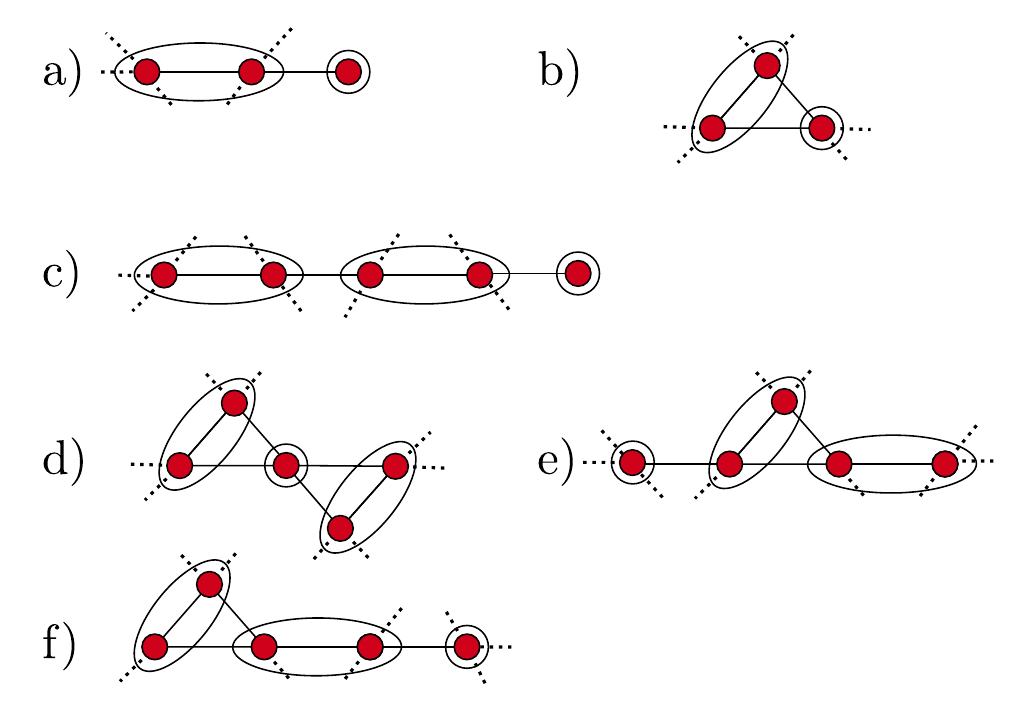}
   	\caption{The above figure shows all possible odd parity clusters of size three and five on the kagome lattice. Notice that these motifs are $\calR$-type regions with one monomer. a) and b) can be generated by attaching a single site to a) of Fig.~\ref{Fig:P_regions} and similarly c), d), e) and f) can be generated by attaching a single site to b) and c) of Fig.~\ref{Fig:P_regions}.  }
   	\label{Fig:R_regions}
   \end{figure}

\subsection{On the site-diluted kagome lattice}	
\label{subsec:kagome}
	\begin{theorem}
	 For $n\geq 1$, every odd-parity connected cluster $C_{2n-1}$ comprising $2n-1$ vertices of the site-diluted kagome lattice admits a near-perfect maximum matching with exactly one unmatched vertex that hosts a lone monomer. Moreover, its Gallai-Edmonds decomposition consists of a single $\calR$-type region. For $n\geq 2$, every even-parity connected cluster $C_{2n-2}$ comprising $2n-2$ vertices of the site-diluted kagome lattice admits a perfect matching; its Gallai-Edmonds decomposition thus has a single $\calP$-type region. 
	\end{theorem}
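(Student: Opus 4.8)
\emph{Set-up and reduction.} The plan is to prove both halves by a single strong induction on the cluster size $N$, carrying along the full Gallai--Edmonds type as the inductive invariant. For the even case, ``$C_{2n-2}$ has one $\calP$-type region'' is equivalent to ``$C_{2n-2}$ has a perfect matching'': a perfect matching forces the even-type and odd-type sets to be empty, so the connected cluster is a single region of unreachable sites. This half is the classical Sumner--Las Vergnas statement, reproved along the way. For the odd case it suffices to establish, for $C_{2n-1}$, that (i) a maximum matching has exactly one monomer and (ii) there are no unreachable sites; given (i), at most one $\calR$-type region can occur (two would require two monomers, since every $\calR$-type region carries a monomer), while (ii) rules out $\calP$-type regions and forces at least one $\calR$-type region, so the decomposition is a single $\calR$-type region spanning $C_{2n-1}$. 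The only input from the lattice is claw-freeness, which is preserved under site dilution; in line-graph language this is the statement that the neighbors of each site are covered by its two triangles.

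\emph{Key Lemma.} The engine is the following: if a site $v$ is joined to a connected cluster $C'$ whose Gallai--Edmonds decomposition is a single $\calR$-type region, and the graph on $C'\cup\{v\}$ is claw-free, then $v$ has an even-type neighbor in $C'$. Proof idea: if not, then (a single $\calR$-type region has no unreachable sites) every neighbor of $v$ in $C'$ is odd-type; fix one, $y$. By the standard Gallai--Edmonds fact that an odd-type site has neighbors in at least two distinct blossoms (equivalently, in two distinct components of the subgraph spanned by even-type sites), $y$ has neighbors $z_1,z_2$ that are even-type and mutually non-adjacent; and $v$ is adjacent to neither, else $v$ would have an even-type neighbor. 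Then $\{v,z_1,z_2\}$ is an independent triple in the neighborhood of $y$, i.e.\ an induced claw, contradiction.

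\emph{The two inductive steps.} For even $N$: pick a non-cut vertex $v$, so $C'=C_{2n-2}\setminus\{v\}$ is connected and odd, hence $\calR$-type by the inductive hypothesis; the Key Lemma supplies an even-type neighbor $y_0$ of $v$ in $C'$, and adjoining the edge $vy_0$ to a maximum matching of $C'$ that exposes $y_0$ (available since every even-type site of an $\calR$-type region can host the monomer) yields a perfect matching of $C_{2n-2}$. For odd $N$: removing a non-cut vertex and invoking the even case already proved for smaller sizes gives (i). For (ii), suppose some site $c$ is unreachable; then no maximum matching of $C_{2n-1}$ exposes $c$, so $C_{2n-1}\setminus\{c\}$ has no perfect matching, yet it has a matching leaving only two monomers (delete $c$ and its partner from a one-monomer maximum matching of $C_{2n-1}$). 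By the hypothesis a connected even cluster has a perfect matching, so $C_{2n-1}\setminus\{c\}$ splits, and by claw-freeness into exactly two pieces; since each piece leaves $0$ or $1$ monomers and the totals must sum to two, both pieces are odd $\calR$-type clusters $D_1,D_2$, each adjacent to $c$. The Key Lemma gives even-type neighbors $y_1\in D_1$ and $y_0\in D_2$ of $c$; gluing a maximum matching of $D_1$ exposing $y_1$, the edge $cy_1$, and a maximum matching of $D_2$ exposing $y_0$ produces a maximum matching of $C_{2n-1}$ exposing $y_0$ --- so $c$ has an even-type neighbor, contradicting that $c$ is unreachable. Hence (ii).

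\emph{Base cases and the main obstacle.} The induction starts at $N=1$ (a lone site is an $\calR$-type region with one monomer), and it is reassuring --- and matches the ``attach one site'' picture of Figs.~\ref{Fig:P_regions}--\ref{Fig:R_regions} --- to check the clusters of sizes $2$ through $5$ directly. The matching surgeries are routine; the genuine content sits in the odd-$N$ step, namely excluding unreachable sites, which is exactly where claw-freeness is indispensable and has no analogue for bond-diluted lattices (there an attaching vertex can ``see'' only already-matched sites of the piece it joins, creating a genuine barrier and a nonzero monomer density). I expect the delicate point to be the bookkeeping inside the Key Lemma when $v$ is joined to its smaller neighbour through both of its triangles simultaneously, and correctly locating an even-type neighbor in each relevant piece; the triangle decomposition of neighborhoods furnished by the line-graph structure of the parent lattice is what keeps this under control, and is why the argument carries over verbatim to the pyrochlore, hyperkagome, and star lattices.
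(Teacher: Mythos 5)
Your argument is correct, and although it shares the paper's skeleton (strong induction, claw-freeness, Gallai--Edmonds structure), both of its technical workhorses differ genuinely from ours. Where we build $C_{2n}$ and $C_{2n+1}$ by attaching one site and then enumerate the local geometries forced by claw-freeness (Figs.~\ref{Fig:C2n-1_to_C2n} and~\ref{Fig:C2n_to_C2n+1}), exhibiting an explicit alternating path in each surviving configuration, you compress the even-parity step into a single Key Lemma: a vertex attached to a cluster that is one spanning $\calR$-type region must have an even-type neighbor, since otherwise an odd-type neighbor $y$ together with two of its even-type neighbors chosen in \emph{distinct} blossoms yields an induced claw at $y$. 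Picking the two even-type witnesses in different blossoms (rather than our specific $b_2$ and $b_3$, which may be adjacent) is exactly what lets you bypass the case analysis, including our internally inconsistent case (a). For the odd-parity step you argue by contradiction --- an unreachable vertex $c$ would force $C_{2n-1}\setminus\{c\}$ to split, claw-freeness at $c$ caps the number of pieces at two, parity and the induction hypothesis make both pieces odd $\calR$-type clusters, and the Key Lemma applied to each piece produces a maximum matching exposing a neighbor of $c$ --- rather than constructively tracing alternating paths from the monomer to every site as we do. Both routes are sound; yours is shorter and never invokes the kagome geometry beyond claw-freeness, so it transfers verbatim to the general setting of Sec.~\ref{subsec:generalizations}, while ours makes the alternating-path structure of the spanning $\calR$-type region explicit. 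In a written version you should spell out the three standard facts you lean on: every $\calR$-type region hosts at least one monomer (so one monomer plus no unreachable sites forces a single spanning region), every connected graph on at least two vertices has a non-cut vertex, and even-type sites in distinct blossoms are never adjacent.
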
	
	
	\begin{proof} 
		By induction:\\
		\begin{itemize}
			
			\item \textbf{Establishing the base case:} For $n=1$, the only odd-parity cluster $C_{1}$ is an isolated site,
			 which is of course trivially an ${\mathcal R}$-type region that hosts a single monomer.
			  For $n=2$, an odd-parity cluster $C_3$ of the site-diluted kagome lattice can have the two possible inequivalent (in terms of connectivity) configurations shown in Figs.~\ref{Fig:R_regions}(a) and (b). By inspection of the particular maximum-density dimer cover displayed in the figure, we see immediately that the entire cluster is a single $\calR$-type region that hosts a single monomer of any maximum-density dimer cover. Finally, for $n=3$, there are four inequivalent (in terms of connectivity) odd-parity clusters $C_5$ possible on the site-diluted kagome lattice. These are shown in Fig.~\ref{Fig:R_regions}(c)---(f). By direct inspection of one of the maximum-density dimer covers of each of these clusters, we see that each of them comprises a single $\calR$-type region that spans the whole cluster. Thus, the base case is established by direct inspection for $n=1, 2, 3$ for odd-parity clusters. Turning to even-parity clusters, the only even-parity cluster $C_{2}$ consists of two neighboring sites
			   (Fig.~\ref{Fig:P_regions}(a)), which can obviously be matched with each other, and $C_2$ is thus
			    trivially a ${\mathcal P}$-type region.  Likewise, there are two possible inequivalent (in connectivity terms) even-parity clusters $C_4$, shown in Fig.~\ref{Fig:P_regions}(b) and (c). These clusters also admit a perfect matching as shown in the figure, and therefore are a ${\mathcal P}$-type region. Thus, the statement of the theorem is trivially valid for $n=2$ and $n=3$, which constitute the base cases for the even-parity clusters.
			
			\item \textbf{Induction hypothesis:} The statement of the theorem is true for all even-parity clusters of size less than or equal to $2n-2$ and all odd-parity clusters of size less than or equal to $2n-1$. Since we have checked the base cases, we know this hypothesis is true for $n=3$.
			
			\item \textbf{Induction:} \\  
 We first ask for the distinct ways in which one obtains an even-parity cluster $C_{2n}$ of size $2n$ from an odd-parity cluster of size $C_{2n-1}$ which is a single ${\mathcal R}$-type region. The key constraint here is that we are working with a site-diluted kagome lattice, so if there are two adjacent sites that survive the dilution, they are necessarily connected by a nearest-neighbor bond. This constraint brings into play the following interesting property of the kagome lattice: If a site has three neighbors, {\em i.e.}, connected by a bond to three adjacent sites, then at least one pair of these neighbors is also connected to each other by a bond. This is true for any set of three neighbors of any one site. Clearly, site-dilution preserves this property, but bond-dilution does not. We will make good use of this local connectivity property in the arguments that follow.
 
Now, the site $s$ that is added to $C_{2n-1}$ has to be adjacent to at least one site of $C_{2n-1}$ in order for $C_{2n}$ to be a connected cluster. Let us denote by $b_1$ this site belonging to $C_{2n-1}$, whose existence is thus guaranteed. It can either be an odd-type site or an even-type site. We first analyze the case when it is an even-type site. In this case, Gallai-Edmonds theory guarantees us that there is a maximum matching of $C_{2n-1}$ that leaves $b_1$ unmatched, so that it hosts a monomer. We may thus start with this matching, and obtain a perfect matching of $C_{2n}$ simply by placing an additional dimer connecting  $s$ to $b_1$. Thus, in this case, $C_{2n}$ is guaranteed to admit a perfect matching, which is what we set out to prove.

Next, consider the possibilities that arise if $b_1$ is an odd-type site of $C_{2n-1}$. The Gallai-Edmonds theorem guarantees that no odd-type site has only one even-type neighbor. For if this were the case, this odd-type site would always be matched to this even-type neighbor in any maximum matching, contradicting the fact that there must exist a maximum matching in which this even-type neighbor is left unmatched.   Thus, $b_1$ must have at least two even-type neighbors. Since it is always possible to find a maximum matching that places a monomer on any even-type site, we choose a maximum matching of $C_{2n-1}$ that places a monomer on one of these even-type neighbors, which we label $b_3$. Since $b_1$ is odd-type, it must be matched by a dimer to some other even-type neighbor in this maximum matching. We label this even-type neighbor $b_2$.  
Thus, $b_1$ must have at least three neighbors: the site $s$ that belongs to $C_{2n}$ but not $C_{2n-1}$, and the two even-type sites $b_2$ and $b_3$ of $C_{2n-1}$, of which $b_3$ hosts a monomer of our chosen maximum matching and $b_2$ is matched to $b_1$ by this maximum matching. 

From the local connectivity property advertised at the outset, at least two of these three sites must be adjacent to each other. Without further analysis, we thus have five possibilities for the local geometry. These are shown in Fig.~\ref{Fig:C2n-1_to_C2n}. 
If we now examine these cases, we see that Fig.~\ref{Fig:C2n-1_to_C2n}(a) is internally inconsistent, since by the very definition of even-type and odd-type sites, we see that $b_1$ would in fact be an even-type site of $C_{2n-1}$ in this case, contradicting our assumption that it is an odd-type site of this odd-parity cluster. 
The other cases shown in Fig.~\ref{Fig:C2n-1_to_C2n}(b)---(e) are internally consistent. And in each of these cases, we see that there is an alternating path that goes from the monomer on the even-type site $b_3$ to the new site $s$. This means we can use this alternating path to add one dimer to the chosen maximum matching of $C_{2n-1}$ to obtain a perfect matching of $C_{2n}$.
This now establishes the claim when $b_1$ is an odd-type site, completing the proof of the first part of the theorem.
			\begin{figure}[ht]
				\centering
				\includegraphics[width=0.8\columnwidth]{./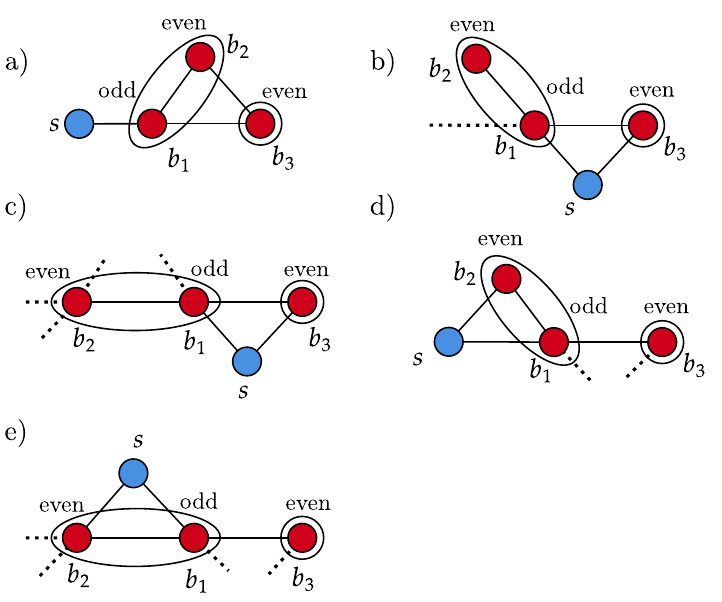}
				\caption[All possible configurations when a single site $s$ is attached to a two-coordinated odd site $b_1$ of a $\calR$-type region on the kagome lattice.]{All possible configurations are shown when a single site $s$ is attached to a two-coordinated odd site $b_1$ of a $\calR$-type region on the kagome lattice. $b_2$ and $b_3$ are neighbours of $b_1$ with $b_2$ being the matched neighbour of $b_1$. Notice that configurations b) and c) are similar, and configurations d) and e) are also similar.}
				\label{Fig:C2n-1_to_C2n}
			\end{figure}	
						\begin{figure}[ht]
				\centering
				\includegraphics[width=0.8\columnwidth]{./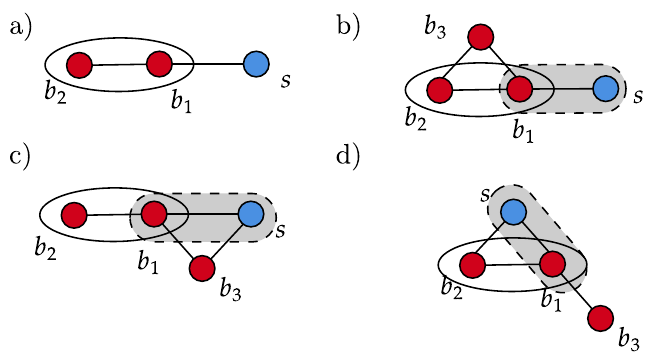}
				\caption[All possible configurations are shown when a single site $s$ is removed to one-coordinated and two-coordinated site $b_1$ of a $\calP$-type region on the kagome lattice.]{(a) A single site $s$ is attached to a site $b_1$, with only one neighbour $b_2$, of a $\calP$-type region on the kagome lattice. (b,c,d) All unique configurations are shown when a single site $s$ is attached to a two-coordinated site $b_1$ of a $\calP$-type region on the kagome lattice. $b_2$ and $b_3$ are neighbours of $b_1$ with $b_2$ being the matched neighbour of $b_1$.}
				\label{Fig:C2n_to_C2n+1}
			\end{figure}

			To prove the second part of the theorem, we need to argue that adding a site $s$ to even-parity cluster $C_{2n}$ that admits (by the first part of the theorem) a perfect matching, always gives an odd-parity cluster $C_{2n+1}$ whose maximum matchings have a single unmatched site and which consists of a single ${\mathcal R}$-type region that spans all of $C_{2n+1}$. To prove this assertion, we will use in a crucial way the additional fact that any smaller odd-parity cluster has already been shown to be spanned by a single ${\mathcal R}$-type region that hosts a single monomer, and any smaller even-parity cluster is known to host a perfect matching.
			
			Let the site of $C_{2n}$ that is adjacent to $s$ be $b_1$. Let $b_2$ be the neighbor of $b_1$ to which it is matched in $C_{2n}$ by a perfect matching of $C_{2n}$. If $b_2$ is the only neighbor of $b_1$ in $C_{2n}$ (as shown in Fig.~\ref{Fig:C2n_to_C2n+1}(a)), then $b_2$ is necessarily an even-type site of the cluster $C_{2n-1}$ obtained by deleting $b_1$ from $C_{2n}$. This is because the perfect matching of $C_{2n}$ that we started with induces a matching of $C_{2n-1}$ with exactly one monomer which is placed on $b_2$. Since $C_{2n-1}$ is a single ${\mathcal R}$-type region that hosts exactly one monomer of any maximum matching, this induced matching is a maximum matching, and $b_2$ being unmatched by it, must be an even-type site of $C_{2n-1}$. Now add both $b_1$ and $s$ to this cluster $C_{2n-1}$ and start with the maximum matching of $C_{2n-1}$ that leaves $b_2$ unmatched. We can now construct a maximum matching of $C_{2n+1}$ by adding a dimer that matches $b_1$ and $s$. In this maximum matching, there is an alternating path from the monomer on $b_2$ that goes to $s$ via $b_1$. This, in conjunction with the previously established property of $C_{2n-1}$ proves that $C_{2n+1}$ is completely spanned by a single ${\mathcal R}$-type region that hosts exactly one monomer. This proves what we set out to establish when $b_2$ is the only neighbor of $b_1$ in $C_{2n}$.
			
Next consider the possibility that $b_1$ has one other neighbor in $C_{2n}$ apart from $b_2$.
Denote this neighbor by $b_3$. In this case, $s$, $b_2$ and $b_3$ are all neighbors of $b_1$. By the previously advertised local connectivity property of the site-diluted kagome lattice, at least two of these must be connected to each other by a nearest-neibhor bond. The various inequivalent ways in which this can happen are shown in Fig.~\ref{Fig:C2n_to_C2n+1}(b)-(d). Again, examine what happens when we delete $b_1$ from $C_{2n}$. In this case, $C_{2n}$ can be reduced to a single smaller odd-parity cluster $C_{2n-1}$, as in Fig.~\ref{Fig:C2n_to_C2n+1}(b), or break up into two smaller clusters as shown in Fig.~\ref{Fig:C2n_to_C2n+1}(c), (d)). In the former case, our previous argument goes through.

In the latter cases, one of these is necessarily an even-parity cluster that can be perfectly matched (by the previously established property of smaller even-parity clusters) and the other is necessarily an odd-parity cluster that is spanned by a single ${\mathcal R}$-type region that hosts exactly one monomer of any maximum matching (again, by the previously established property of smaller odd-parity clusters). 
Since a perfect matching of $C_{2n}$ placed a dimer connecting $b_1$ to $b_2$, the cluster containing $b_3$ is perfectly matched by this matching, and hence an even-parity cluster. Therefore the cluster containing $b_2$ must be odd-parity. We can now use apply the previous argument to this cluster, which admits a maximum matching with exactly one monomer that can be located at $b_2$. Reinstating $b_1$ and $s$ and the perfectly-matched even-parity cluster containing $b_3$, we then obtain a maximum matching of $C_{2n+1}$ that has an alternating path from the monomer on $b_2$ to the new site $s$. This gives us a maximum matching of $C_{2n+1}$ with exactly one monomer, which can be moved to $s$. 

Now, consider cases Fig.~\ref{Fig:C2n_to_C2n+1}(c) and (d) separately. In the case shown in Fig.~\ref{Fig:C2n_to_C2n+1}(c), since $s$ is directly connected to $b_3$ by a bond, the cluster obtained by adding $s$ to the perfectly-matched cluster containing $b_3$ is also a smaller odd-parity cluster that must, by our previously established property of such clusters, be spanned by a single ${\mathcal R}$-type region that hosts exactly one monomer. In other words, starting from site $s$, the entire odd-parity cluster $C_{2n+1}$ can be reached by alternating paths, completing our proof in this case. 

In the case shown in Fig.~\ref{Fig:C2n_to_C2n+1}(d), the cluster containing $b_2$ is a smaller odd-parity cluster that hosts a maximum matching with exactly one monomer, located at $b_2$. For $C_{2n+1}$, this immediately implies the existence of a maximum matching with exactly one monomer, since the cluster containing $b_3$ (obtained by disconnecting it from the rest of $C_{2n}$ by deletion of $b_1$) admits a perfect matching, being a smaller even-parity cluster. Moreover, this monomer can be placed at $b_1$ simply by matching $b_2$ to $s$. Consider the cluster obtained by just adding $b_1$ to the perfectly matched cluster containing $b_3$. This is again a smaller odd-parity cluster, so alternating paths starting from the monomer at $b_1$ can reach all parts of this cluster containing $b_3$.
This shows that all of $C_{2n+1}$ can be reached by alternating paths starting from $s$ which can be chosen to host a monomer of a maximum matching of $C_{2n+1}$. This completes the proof in this case too.
And finally, we note that this argument immediately generalizes to cases in which $b_1$ has more than two neighbors in $C_{2n}$.

		\end{itemize}
	\end{proof}

    \subsection{Generalizations and connection to earlier mathematical results}
    \label{subsec:generalizations}
    We have presented a proof of the basic result in detail for the site-diluted kagome lattice. As is clear from this proof, the proof relies crucially on elements of Gallai-Edmonds theory. As we have also emphasized at the very outset, it also relies equally crucially on a specific local connectivity property of the site-diluted kagome lattice, that may be summarized thus:  If one considers any three surviving neighbors of a site, then at least two of these neighbors are also connected to each other by a link of the cluster. 

It is therefore interesting to ask what other lattices we encounter in our study of frustrated magnetism share these two properties. A quick check reveals that the star lattice (equivalently, the decorated honeycomb lattice) also obeys this property, as does the three-dimensional pyrochlore lattice as well as three-dimensional generalizations of the kagome lattice, which have been studied earlier in statistical mechanics. More generally, it is easy to see that these two properties are shared by any lattice which is obtained by placing sites on the links of an underlying lattice and defining two such sites to be nearest neighbors if the corresponding links emanate from a common site of this underlying lattice. Such lattices are called line graphs in the graph theory literature. Thus, for instance, the kagome lattice is the line graph of the honeycomb net, the star lattice is the line graph of the Lieb lattice obtained from the honeycomb lattice by adding two vertices to every link of the honeycomb lattice, and the pyrochlore lattice is the line graph of the three-dimensional diamond lattice. 
			\begin{figure}[ht]
				\centering
				\includegraphics[width=0.8\columnwidth]{./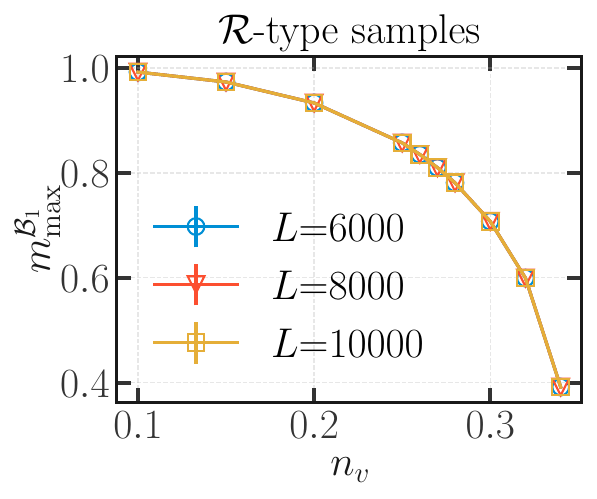}
				\caption{The largest blossom in the ${\mathcal R}$-type region that spans the largest connected cluster of a site-diluted kagome lattice occupies a nonzero fraction of the cluster, which grows rapidly as the dilution $n_v$ is reduced}
				\label{Fig:blossomsize}
			\end{figure}	
						\begin{figure}[ht]
				\centering
				\includegraphics[width=0.8\columnwidth]{./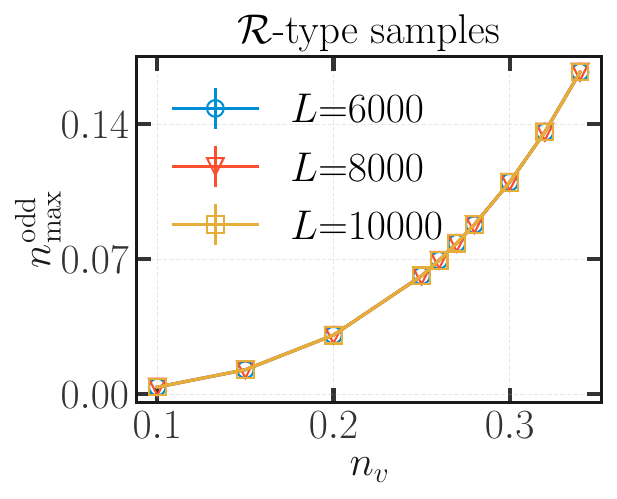}
				\caption{The number density of odd-type sites in the ${\mathcal R}$-type region that spans the largest connected cluster of a site-diluted kagome lattice goes to zero rapidly as the dilution $n_v$ is reduced. }
				\label{Fig:oddsitedensity}
			\end{figure}

As mentioned earlier, it turns out that this key local connectivity property we have exploited has a name in the mathematical literature, and a theory built around it. Graphs that satisfy it are called ``claw-free graphs''~\cite{FaudreeClawFreeReview1997}. Indeed, one part of our result was already known in this literature on claw-free graphs: Sumner~\cite{SumnerAMS1974} and Las Vergnas~\cite{LasVergnas1975} independently showed that claw-free graphs with an even number of sites necessarily admit perfect matchings. 
Now, it is of course possible to use this classic result of Sumner and Las Vergnas on even-parity clusters to show that odd-parity connected clusters of such claw-free graphs must have exactly one unmatched vertex in any maximum matching. 

However, nothing can be deduced about the Gallai-Edmonds decomposition of an odd-parity connected cluster from these classic results if one follows this route.
Our proof strategy, which uses the structure theory of Gallai and Edmonds in a crucial way, is very different and we are therefore able to obtain crucial additional information: The Gallai-Edmonds decomposition of claw free graphs with an odd number of vertices consists of a single ${\mathcal R}$-type region and no ${\mathcal P}$-type region. From earlier work, this guarantee, that the entire connected cluster is a single ${\mathcal R}$-type region, is expected to have immediate consequences for a class of localization problems on such diluted lattices, as well as for the local susceptibility of a gapped resonating valence bond spin liquid state on such diluted lattices.

\section{Structure of cluster-spanning ${\mathcal R}$-type region}
\label{sec:blossoms}
While our approach does allow us to prove that any odd-parity connected component of the site-diluted kagome lattice is completely spanned by a single ${\mathcal R}$-type region, we have not been able to prove anything about the structure of this cluster-spanning ${\mathcal R}$-type region. In particular, we are unable to obtain to obtain any rigorous results about the fraction of the cluster that is contained in the largest blossom, or the number of blossoms, odd-type sites and even-type sites, and how any of these scale with the size of the connected cluster. To partially remedy this, we have resorted to a large-scale computational study. 

We study the $L \times L$ samples of the site-diluted kagome lattice, with uncorrelated dilution $n_v$ (and $3(1-n_v)L^2$ sites on average), focusing on the Gallai-Edmonds decomposition of the largest connected component of each sample. At low $n_v$, this connected component has odd parity half the time (in the limit of large $L$). Consistent with the result we have just proved, we find that its Gallai-Edmonds decomposition always consists only of odd-type and even-type sites that form a single ${\mathcal R}$-type region. We have kept track of the mass density (measured by the number of even-type sites belonging to it divided by the total number of sites in the connected cluster) of the largest blossom in this ${\mathcal R}$-type region, as well as the number density of odd-type sites (measured as a fraction of the total number of sites in the connected cluster being studied). Together, these measurements give us a reasonably complete picture of the morphology of this ${\mathcal R}$-type region that spans the entirety of the largest connected component of the lattice at various values of dilution $n_v$. The results are shown in Fig.~\ref{Fig:blossomsize} and Fig.~\ref{Fig:oddsitedensity}. 

From this data, it is clear at small dilution $n_v$ that the largest blossom essentiallly takes over the entire ${\mathcal R}$-type region that spans the largest connected cluster of a site-diluted kagome lattice whenever this cluster has odd parity. In this low dilution regime, the number density of odd-type sites in this ${\mathcal R}$-type region shrinks rapidly to zero. It would be interesting to explore consequences of this random geometry for the physics of the lone magnetic moment that is expected to live inside such an ${\mathcal R}$-type region in the sRVB spin liquid phase of a kagome antiferromagnet, and we hope to return to this in future work.

   \section{Acknowledgments}
   We thank Piyush Srivastava for pointing us to the mathematical literature on claw-free graphs. We gratefully acknowledge generous allocation of computing resources by the Department of Theoretical Physics (DTP) of the Tata Institute of Fundamental Research (TIFR), and related technical assistance from K. Ghadiali and A. Salve. The work of RB was supported at the TIFR by a graduate fellowship from DAE, India and formed a part of his Ph.D thesis.  KD was supported at the TIFR by DAE, India, and in part by a J.C. Bose Fellowship (JCB/2020/000047) of SERB, DST India, and by
	the Infosys-Chandrasekharan Random Geometry Center
	(TIFR).

   \bibliography{references}

\end{document}